\documentclass[letterpaper,10pt,conference]{ieeeconf}
 
\IEEEoverridecommandlockouts
\usepackage{graphicx}
\usepackage{xcolor}
\usepackage{subcaption}
\usepackage{amsmath}
\usepackage{amssymb}
\usepackage{mathtools}
\usepackage{algorithm}
\usepackage{algorithmic}
 
\mathtoolsset{showonlyrefs}

\newtheorem{definition}{Definition}

\newtheorem{theorem}{Theorem}
\newtheorem{corollary}{Corollary}

\newtheorem{remark}{Remark}

\newtheorem{problem}{Problem}
 
\title{\LARGE \bf
{Fast Risk Certification of Candidate Trajectories
under Uncertain Time-Varying Constraints
}}
 
\author{Srimanta Santra$^{1}$, Oleksii Molodchyk$^{1}$, Matti Noack$^{1}$ and Timm Faulwasser$^{1}$%
\thanks{$^{1}$Institute of Control Systems, Hamburg University of Technology,
21073 Hamburg, Germany.
E-mail: \{srimanta.santra, oleksii.molodchyk, matti.noack\}@tuhh.de, timm.faulwasser@ieee.org}%
\thanks{This project is funded by the Deutsche Forschungsgemeinschaft
(DFG, German Research Foundation) - SFB 1615 - 503850735.}%
}
 
\begin{document}
 
\maketitle
\thispagestyle{empty}
\pagestyle{empty}

\begin{abstract}
This paper studies the certification of a fixed candidate trajectory on a
finite certification grid under parametric uncertainty. For each
constraint-time pair, we define a scalar measure of constraint
violation and aggregate the resulting pointwise chance constraints into
a worst-case Value-at-Risk (VaR) margin. The goal is not to generate a
new trajectory, but to assess online whether a trajectory produced by a
planner or predictive controller is sufficiently safe on the
certification grid. Direct evaluation requires repeated uncertainty
propagation and is often too expensive for computationally demanding
models. We therefore adopt an offline-online scheme: offline, a
surrogate of the constraint violation map along the candidate
trajectory is constructed using polynomial chaos expansion (PCE) when
the uncertainty law is known, or kernel regression when only sampled
input-output data are available; online, the surrogate is sampled to
evaluate conservative VaR bounds at low computational cost. On the
theoretical side, we derive a finite-sample upper bound for the
grid-based VaR margin using empirical quantiles, the
Dvoretzky-Kiefer-Wolfowitz (DKW) inequality, and a union bound over all
constraint-time pairs, without assuming a parametric family for the
underlying violation distribution. We also show how a uniform
surrogate error bound transfers to the certified VaR margin. The
approach is illustrated on a crystallization population balance model,
where the surrogate-based risk estimates track direct Monte Carlo
results while substantially reducing online evaluation time.
\end{abstract}

\section{Introduction}
\label{sec:intro}
Trajectory planning, predictive control, and learning-based decision
schemes often generate candidate trajectories that are safe for a
nominal model. In safety-critical settings, however, nominal feasibility
is not sufficient: a candidate trajectory should also be certified under
uncertainty quickly enough for online use. This need arises, for example, in risk-aware trajectory verification and
chance-constrained planning under disturbances \cite{Blackmore2011},
model mismatch \cite{jasour2021realtime}, and uncertain environments
\cite{Wang2020}. It is particularly
relevant for systems with expensive dynamics, such as population balance
models and other PDE constrained systems, where repeated uncertainty
propagation is too costly for real-time decision support.

This paper studies safety \emph{certification}, rather than trajectory
generation. Given a fixed candidate trajectory over a finite horizon and
uncertain time-varying safety constraints, we ask whether the
probability of violation remains below a prescribed risk level at every
point of a certification grid. In contrast to methods that synthesize a
new control policy or trajectory, our goal is a fast and rigorous
\emph{post-hoc certificate} for a trajectory that has already been
generated.

The main challenge is computational. Direct evaluation requires repeated
propagation of the uncertain model and repeated estimation of tail
behavior across all constraint-time pairs. For nonlinear and
computationally intensive systems, this is generally too expensive for
online use. Related computational bottlenecks have motivated tractable
approximations in chance-constrained control and stochastic MPC
\cite{Mesbah2016,Farina2016}. Here, the difficulty is amplified by the
need to certify an entire finite-horizon trajectory while retaining a
clear risk interpretation.

Our approach combines offline surrogate construction with online risk
certification. Polynomial chaos expansion (PCE) is used when the uncertainty law 
is known and the constraint evaluation is sufficiently smooth, 
yielding a structured surrogate that can be evaluated cheaply 
\cite{xiu2002wiener}. Kernel-based regression provides a complementary
data-driven alternative when one prefers a flexible nonparametric model
or only sampled input-output data are available
\cite{aronszajnTheoryReproducingKernels1950,scholkopfGeneralizedRepresenterTheorem2001,Schobi2015}.
Using these surrogates, we summarize the family of pointwise chance
constraints by a single worst-case Value-at-Risk (VaR) margin on the
certification grid. VaR based formulations have recently attracted
considerable interest in risk-aware stochastic control and planning
\cite{Tooranjipour2024}. Closest in spirit is
\cite{Miller2025}, which studies upper bounds on the maximum VaR of a
state function along stochastic processes. Our setting differs in that
we focus on post-hoc certification of a fixed trajectory on a prescribed
grid and derive finite sample guarantees for that grid based problem.

The contributions of the paper are threefold: First, we formulate the
certification problem for a fixed candidate trajectory as a worst-case
VaR condition on a prescribed certification grid. Second, we derive a
finite-sample upper bound for this grid-based VaR margin using empirical
quantiles, the Dvoretzky-Kiefer-Wolfowitz inequality, and a union bound
over all constraint-time pairs. Third, we show how a uniform surrogate
approximation error propagates to the VaR margin, which yields a
sufficient surrogate-based certification rule. The resulting guarantee
is pointwise on the certification grid, it is not a joint chance
guarantee over the full horizon.

\section{Problem Formulation}
\label{sec:problem}
We study safety certification of a fixed candidate trajectory on a
prescribed discrete-time grid
$\mathcal{T}=\{t_j\}_{j=1}^{T}$,
with $t_1=t_0$ and $t_{T}=t_f$.
This reflects the sampled nature of digital planning and control, where
candidate trajectories are generated, updated, and checked at discrete
decision instants. Accordingly, the proposed certificate is formulated
over $\mathcal T$. All guarantees established below are therefore
grid-based and pointwise-in-time. Continuous-time guarantees between
grid points are outside the scope of the present work.

All random quantities are defined on a common probability space
$(\Omega,\mathcal{F},\mathbb{P})$ with the space of outcomes $\Omega$, $\sigma$-algebra $\mathcal{F}$, and the probability measure $\mathbb{P}$. Capital letters denote random
variables, whereas lowercase letters denote deterministic realizations.
Let
$P:\Omega\to \Pi \subseteq \mathbb{R}^{n_p}$
be a time-invariant random parameter vector with probability law
$\mu_P$ and with realizations in $\Pi\subseteq\mathbb{R}^{n_p}$. 
We assume that $P$ admits a finite variance, and indicate it using the notation
$P\in L^2(\Omega,\mathcal{F},\mathbb{P};\mathbb{R}^{n_p})$.

Let
$x_c:\mathcal{T}\to\mathbb{R}^{n_x}$
denote the given deterministic candidate trajectory. Define $N$ maps
$g_i:[t_0,t_f]\times\mathbb{R}^{n_x}\times\mathbb{R}^{n_p}\to\mathbb{R}$ for each index $i\in\{1,\dots,N\}$. Consider constraints
$
g_i(t, x_c(t), p) \leq 0$, 
$\forall t \in \mathcal{T}
$,
imposed on $x_c$. Observe that the uncertainty here enters as a realization $p$ of $P\sim \mu_P$ through the last argument of each $g_i$.
 
Since $x_c$ is deterministic, the only source of uncertainty in evaluating each constraint $g_i$ is the random parameter $P$. Our goal is to reason probabilistically about constraint satisfaction, so
we lift each constraint evaluation to a scalar random variable
\begin{align}
\label{eq:violation_variable}
V_{i,j} &:=
g_i\bigl(t_j,\,x_c(t_j),\,
P\bigr),
\quad V_{i,j}:\Omega\to\mathbb{R}.
\end{align}
Intuitively, $V_{i,j}(\omega)$ is the constraint value when $p=P(\omega)$: negative values indicate satisfaction, positive values indicate violation.
We assume $V_{i,j}\in L^2(\Omega,\mathcal{F},\mathbb{P}; \mathbb{R})$ for all $(i,j)$.
This separates the
\emph{deterministic} structure of the constraint value function its dependence on $t$ and $p$ from its
\emph{random} realization through~$P$.
Typical examples include a motion planning module that proposes a fixed
lane-change or obstacle avoidance trajectory for an autonomous vehicle
\cite{Blackmore2011}, and a guidance module that proposes a fixed
reference path for a UAV. In both cases, the candidate trajectory is
deterministic, but the associated constraint violation variables become
random when evaluated under uncertain obstacle motion, wind
disturbances, or uncertain model parameters \cite{luders2016wind}. The
proposed framework is intended as a post-processing safety filter: a
planner or predictive controller first generates the candidate
trajectory $x_c$, and the certificate then checks whether $x_c$ is
sufficiently safe under uncertainty before implementation.
\begin{definition}[Value-at-Risk]
\label{def:VAR}
Let $Z$ be a scalar random variable and let $\delta\in(0,1)$.
The Value-at-Risk (VaR) of $Z$ at level $\delta$ is defined by
$\mathrm{VaR}_{\delta}(Z)
= \inf\bigl\{\alpha\in\mathbb{R}:\mathbb{P}(Z\le\alpha)\ge 1-\delta\bigr\}$.
Equivalently, for any $\alpha\in\mathbb{R}$,
\begin{align}
  \mathrm{VaR}_{\delta}(Z) \le \alpha
  \quad\Longleftrightarrow\quad
  \mathbb{P}(Z > \alpha) \le \delta.
  \label{eq:var_tail_equiv}
\end{align}
\vspace{-3.5ex} 
\end{definition}
Applying Definition~\ref{def:VAR} to the constraint value random variable
$V_{i,j}$, the condition
$\mathrm{VaR}_{\delta}(V_{i,j})\le 0$
is equivalent to
$\mathbb{P}(V_{i,j}>0)\le \delta$,
that is, the probability of violating constraint $i$ at time $t_j$
is at most $\delta$.
\begin{definition}[Worst-case grid VaR margin]
\label{def:rho_delta}
For a given candidate trajectory $x_c$, certification grid
$\mathcal T$, and risk level $\delta\in(0,1)$, define
$
  \rho_\delta
  = \max_{i\in\{1,\dots,N\}}
    \max_{j\in\{1,\dots,T\}}
    \mathrm{VaR}_{\delta}(V_{i,j}).
$
\end{definition}
\begin{remark}
\label{rem:pointwise_violation_bound}
The scalar quantity $\rho_\delta$ summarizes the worst-case pointwise
tail risk over all constraint-time pairs on the certification grid. In
particular, if
$\rho_\delta\le 0$,
then
$\mathbb{P}(V_{i,j}>0)\le \delta$,
$\forall i\in\{1,\dots,N\},\ \forall j\in\{1,\dots,T\}$.
Thus, $\rho_\delta\le 0$ provides a pointwise chance guarantee for
every constraint at every grid point. By a union bound over all $(i,j)$ constraint-time pairs, the
probability that at least one constraint is violated satisfies
$\mathbb{P}\bigl(\exists (i,j):V_{i,j}>0\bigr)
\le
\min\{1,N T\,\delta\}$.
This bound is generally conservative and should not be interpreted as a
joint chance constraint over the full horizon.
\end{remark}
\begin{problem}
\label{prob:main_problem}
Given the candidate trajectory $x_c$ on the certification grid
$\mathcal T$, the constraint violation $\{g_i\}_{i=1}^{N}$,
the induced constraint value $V_{i,j}$ defined in \eqref{eq:violation_variable},
the uncertainty law of $P$, and a prescribed risk level
$\delta\in(0,1)$, determine whether $\rho_\delta\le 0$.
The objective is to construct a certification procedure that can be
prepared offline and evaluated online at low computational cost.
\end{problem}

\section{VaR Evaluation}
\label{sec:method}
Problem~\ref{prob:main_problem} identifies $\rho_\delta$ as the central
quantity to certify. Its direct computation requires repeated
uncertainty propagation at each grid point and is generally too
expensive for online use. We therefore adopt a two-stage approach. In
the \emph{offline stage}, surrogates of the constraint violation random
variables are constructed from model evaluations on $\mathcal{T}$. In
the \emph{online stage}, these surrogates are queried to evaluate the
pointwise VaR values and the worst-case margin at negligible cost. To
this end, we construct surrogates for the random variables $V_{i,j}$, as
described in Section~\ref{subsec:pce} or
Section~\ref{subsec:kernel_extension}.

\subsection{Polynomial Chaos Surrogate of Constraint Violations}
\label{subsec:pce}
Polynomial chaos expansion (PCE) provides a structured surrogate for the
constraint value random variables $V_{i,j}$ defined
in~\eqref{eq:violation_variable}. Once constructed offline, it enables
fast sampling and efficient empirical VaR evaluation online.
Following \cite{ou2025polynomial}, we use $P$ directly as the argument
of the polynomial basis. Let $\{\Psi_k\}_{k=0}^\infty$ be a complete
orthonormal polynomial basis of $L^2(\Omega,\mathcal{F},\mathbb{P};
\mathbb{R})$, with $\Psi_0\equiv 1$,
satisfying
$\mathbb{E}\!\left[\Psi_k(P)\,\Psi_r(P)\right] = \delta_{kr}$,
$k,r\in\mathbb{N}_0$.
The choice of polynomial family depends on the distribution of $P$:
Hermite polynomials correspond to Gaussian measures and Legendre
polynomials to uniform measures
\cite{lemaitre2010spectral,ou2025polynomial}.

Every $V_{i,j}$ admits a unique $L^2$-convergent expansion
\begin{align}
V_{i,j}
=
\sum_{k=0}^{\infty} c_{i,j,k}\,\Psi_k(P),
\label{eq:pce_expansion}
\end{align}
with deterministic coefficients $c_{i,j,k}\in\mathbb R$ given by
\begin{align}
c_{i,j,k}
&=
\int_{\Pi}
g_i\bigl(t_j,x_c(t_j),p\bigr)\,\Psi_k(p)\,d\mu_P(p).
\label{eq:pce_coeff_exact}
\end{align}
Here $p\in\Pi$ is a deterministic integration variable, whereas
$P$ denotes the random parameter vector.
In practice, the infinite series \eqref{eq:pce_expansion} is truncated
to $K$ terms, yielding the surrogate
\begin{align}
\widehat V_{i,j}
=
\sum_{k=0}^{K-1} c_{i,j,k}\,\Psi_k(P).
\label{eq:pce_violation}
\end{align}
For a total-degree basis of maximal degree $d$ in $n_p$ stochastic
dimensions, the number of retained basis functions is
$
K=\binom{n_p+d}{d}.
$
Because the basis is orthonormal, the truncated expansion directly
provides moments
$\mathbb E[\widehat V_{i,j}] = c_{i,j,0}$,
$\mathrm{Var}(\widehat V_{i,j})
=
\sum_{k=1}^{K-1} c_{i,j,k}^2$.
In general, the coefficients in \eqref{eq:pce_coeff_exact} are not
available in closed form. We therefore compute them offline by
non-intrusive spectral projection
\cite{xiu2002wiener,lemaitre2010spectral}. Given $Q$ quadrature nodes
$\{p_q\}_{q=1}^{Q}\subset\Pi$ and corresponding weights
$\{w_q\}_{q=1}^{Q}$, we approximate
\begin{align}
c_{i,j,k}
&=
\sum_{q=1}^{Q}
w_q\,g_i\bigl(t_j,x_c(t_j),p_q\bigr)\,\Psi_k(p_q),
\label{eq:pce_projection}
\end{align}
for $k=0,\dots,K-1$. Alternative non-intrusive constructions, such as
regression-based PCE or stochastic collocation, can also be used
\cite{lemaitre2010spectral,tan2025offset,ou2025polynomial}.

Once the coefficients $c_{i,j,k}$ have been computed offline, the
surrogate $\widehat V_{i,j}$ in \eqref{eq:pce_violation} can be
evaluated at negligible cost for new realizations of $P$. This makes
repeated empirical quantile and VaR evaluation much cheaper than
repeated evaluation of the original constraint function $g_i$. PCE is particularly
effective when the dependence of $V_{i,j}$ on $P$ is sufficiently
smooth, so that low-order polynomials already provide an accurate
approximation \cite{xiu2002wiener,lemaitre2010spectral}.
It is worth emphasizing that the PCE construction itself is not
distribution-free, since the basis $\Psi_k$ is chosen according
to the probability law $\mu_P$ of the uncertain parameter vector.
The term distribution-free in the certification results below refers
instead to the finite-sample VaR bound, which does not require a
parametric family for the distribution of the constraint values $V_{i,j}$.


\subsection{Kernel-Based Surrogate of the Constraint Violation}
\label{subsec:kernel_extension}
When the dependence of the constraint values on the uncertain
parameter is not well captured by a fixed polynomial basis, or when 
the model is available only through sampled input-output evaluations, 
we use a kernel-based surrogate. Fix a constraint index 
$i\in\{1,\dots,N\}$ and define the joint input domain
$\mathcal{Z}:=\mathcal{T}\times\Pi$.
For each fixed $i$, we construct a single surrogate over the joint
input $z=(t,p)\in\mathcal{Z}$. This avoids training separate models
for every grid point $t_j$ and exploits regularity of the constraint
function $g_i$ across the certification grid.

The offline training data are obtained by evaluating the constraint
function $g_i$ at sampled time-parameter pairs. Specifically, let
$\mathcal{D}_i = \{(z_\ell, y_\ell)\}_{\ell=1}^{L}$,
$z_\ell = (t_\ell, p_\ell) \in \mathcal{Z}$,
$y_\ell = g_i\bigl(t_\ell,\, x_c(t_\ell),\, p_\ell\bigr)$,
where $t_\ell \in \mathcal{T}$ and $p_\ell \in \Pi$ denote the sampled
time and parameter values, respectively. Let
$\kappa: \mathcal{Z} \times \mathcal{Z} \to \mathbb{R}$
be a positive definite kernel
\cite{aronszajnTheoryReproducingKernels1950}. The kernel induces a
reproducing kernel Hilbert space (RKHS) $\mathcal{H}$ of scalar-valued
functions on $\mathcal{Z}$, see also
\cite[Ch.~1]{berlinetReproducingKernelHilbert2004}. Intuitively,
$\kappa(z,z')$ measures similarity between two input pairs
$z=(t,p)$ and $z'=(t',p')$, and thereby determines the class of
functions that can be represented by the surrogate.

We compute the surrogate by kernel ridge regression
\begin{align}
\widetilde g_i
\in
\arg\min_{f\in\mathcal H}
\sum_{\ell=1}^{L}
\bigl(f(z_\ell)-y_\ell\bigr)^2
+\lambda\|f\|_{\mathcal H}^2,
\label{eq:krr_problem}
\end{align}
where $\lambda>0$ is a regularization parameter and $\lVert \cdot \rVert_\mathcal{H}:\mathcal{H}\to\mathbb{R}$ is the induced RKHS norm. By the representer
theorem \cite{scholkopfGeneralizedRepresenterTheorem2001}, every
minimizer of \eqref{eq:krr_problem} admits the finite-dimensional form
$
\widetilde g_i(z)
=
\sum_{\ell=1}^{L}\alpha_\ell\,\kappa(z,z_\ell)$,
$z\in\mathcal Z,
$
for suitable coefficients $\alpha_\ell\in\mathbb R$.
Let
$
\alpha=[\alpha_1,\dots,\alpha_{L}]^\top$,
$y=[y_1,\dots,y_{L}]^\top.
$
Define the kernel matrix
$
K\in\mathbb R^{L\times L},
\qquad
(K)_{\ell m}=\kappa(z_\ell,z_m),
\qquad
\ell,m=1,\dots,L.
$
Then the coefficient vector is obtained from
$
(K+\lambda I)\alpha=y,
$
that is,
$
\alpha=(K+\lambda I)^{-1}y.
$

Once the deterministic surrogate map $\widetilde g_i$ has been computed
offline, it induces the surrogate constraint random variable
$\widetilde V_{i,j}:\Omega\to\mathbb R$ with its realization
\[\widetilde V_{i,j}(\omega)
:=
\widetilde g_i\bigl(t_j,P(\omega)\bigr).\]
Hence, the kernel-based approximation enters the certification pipeline
through the composition of the deterministic surrogate $\widetilde g_i$
with the random parameter vector $P$.
Pointwise VaR values are then estimated by drawing samples from the law
of $P$, or by resampling from empirical uncertainty data when only
samples are available, and evaluating
$\widetilde g_i(t_j,P(\omega))$ repeatedly. This yields empirical
samples of $\widetilde V_{i,j}$, from which conservative quantiles and
worst-case VaR margins can be computed exactly as in the finite-sample
certification procedure developed below.

\begin{remark}
The kernel surrogate is constructed through deterministic RKHS
regression and does not require a prescribed polynomial basis of
the constraint values $V_{i,j}$. Distributional
information on $P$ is needed only at the stage of VaR evaluation, for
example through direct sampling from $\mu_P$ or resampling from
available data. Under additional assumptions, kernel ridge regression
also admits probabilistic interpretations. For example, under Gaussian
observation noise and an appropriate Gaussian prior, the estimator
obtained from \eqref{eq:krr_problem} can be interpreted as the posterior
mean of a Gaussian process regression model
\cite{rasmussen2006gaussian}. For more general stochastic interpretations,
including Wiener kernel regression, we refer to
\cite{faulwasser2025wiener,Schobi2015}.
\end{remark}


\section{Main Results}
We now derive the certification guarantees that justify the use of both
surrogate constructions in Problem~\ref{prob:main_problem}.

 \subsection{Finite-Sample Certificate for VaR Margin}
 \label{subsec:offline_var_certification}
We first develop a finite-sample certificate from samples of the
constraint values $V_{i,j}$ computed along the fixed candidate
trajectory on the certification grid. The same empirical construction
will later be applied to values generated by a surrogate model.

Let $P:\Omega\to\mathbb R^{n_p}$ be the uncertain parameter vector from
Section~\ref{sec:problem}, and let
$\omega_{1},\dots,\omega_{M_{\mathrm{off}}}\in\Omega$
be i.i.d.\ sample points. For each $m=1,\dots,M_{\mathrm{off}}$, define
the sampled parameter realization
$
p_{m}:=P(\omega_{m}).
$
For each constraint-time pair $(i,j)$, define the sampled constraint 
values
$
V_{i,j}(t_j,p_{m})
=
g_i\bigl(t_j,x_c(t_j),p_{m}\bigr),
$
$m=1,\dots,M_{\mathrm{off}}$.
Thus, $V_{i,j}(t_j,p_{m})$ is the $m$-th sample of the scalar random variable
$V_{i,j}=V_{i,j}(\omega)$.
Let
$
F_{i,j}(\alpha)=\mathbb P(V_{i,j}\le \alpha)
$
denote the true cumulative distribution function (CDF) of $V_{i,j}$, and
let
\begin{align}
\widehat F_{i,j}^{M_{\mathrm{off}}}(\alpha)
=
\frac{1}{M_{\mathrm{off}}}
\sum_{m=1}^{M_{\mathrm{off}}}
\mathbb{I}\!\left\{V_{i,j}(t_j,p_{m})\le \alpha\right\}
\label{eq:empirical_cdf}
\end{align}
be the corresponding empirical CDF.
To obtain a finite-sample confidence bound, we apply the
Dvoretzky-Kiefer-Wolfowitz (DKW) inequality together with a union
bound over all $NT$ constraint-time pairs
\cite{dvoretzky1956asymptotic,massart1990tight}. For a prescribed
failure probability $\beta\in(0,1)$, define
\begin{align}
\varepsilon_{M_{\mathrm{off}}}
=
\sqrt{\frac{1}{2M_{\mathrm{off}}}
\log\!\Bigl(\frac{2NT}{\beta}\Bigr)}.
\label{eq:epsN_def}
\end{align}
Then the event
\begin{align}
\mathcal E_{M_{\mathrm{off}}}
=
\Bigl\{
\sup_{\alpha\in\mathbb R}
\bigl|
\widehat F_{i,j}^{M_{\mathrm{off}}}(\alpha)-F_{i,j}(\alpha)
\bigr|
\le \varepsilon_{M_{\mathrm{off}}},
\ \forall i,\forall j
\Bigr\}
\label{eq:event_E}
\end{align}
satisfies
$
\mathbb P(\mathcal E_{M_{\mathrm{off}}})\ge 1-\beta.
$
On the event $\mathcal E_{M_{\mathrm{off}}}$, the empirical CDF may
underestimate the true CDF by at most
$\varepsilon_{M_{\mathrm{off}}}$. Therefore, a conservative empirical
quantile is obtained by replacing the target CDF level $1-\delta$ by
\begin{align}
p_{M_{\mathrm{off}}}^+
=
1-\delta+\varepsilon_{M_{\mathrm{off}}}.
\label{eq:pNplus}
\end{align}
To ensure that $p_{M_{\mathrm{off}}}^+$ is a valid CDF level, we assume
\begin{align}
\varepsilon_{M_{\mathrm{off}}}\le \delta.
\label{eq:epsN_gmall}
\end{align}
Then $p_{M_{\mathrm{off}}}^+\in[0,1]$, and for each pair $(i,j)$ we define
the conservative empirical quantile
\begin{align}
\bar q_{i,j,\delta}
=
\inf\bigl\{
\alpha\in\mathbb R:
\widehat F_{i,j}^{M_{\mathrm{off}}}(\alpha)\ge p_{M_{\mathrm{off}}}^+
\bigr\}.
\label{eq:conservative_var_def}
\end{align}
This condition is essential for the certificate below. If
\eqref{eq:epsN_gmall} is violated, then
$p_{M_{\mathrm{off}}}^+>1$, and the conservative quantile in
\eqref{eq:conservative_var_def} is not defined in the present form.
Then
$
\bar q_{i,j,\delta}
=
V_{i,j}^{[k^+]},
$
where
$
k^+=\bigl\lceil p_{M_{\mathrm{off}}}^+\,M_{\mathrm{off}}\bigr\rceil.
$
Thus, $\bar q_{i,j,\delta}$ is the conservative empirical
$(1-\delta)$-quantile of $V_{i,j}$ for the pair $(i,j)$.
The corresponding worst-case certificate is
$\bar\rho_\delta=\max_i\max_j \bar q_{i,j,\delta}$.
Hence, the certification procedure has two stages: first construct the
pairwise conservative quantiles $\bar q_{i,j,\delta}$, and then take
their maximum over all constraint-time pairs to obtain the single scalar
certificate $\bar\rho_\delta$.

\begin{remark}
\label{rem:beta}
The parameter $\beta$ is the failure probability of the certificate.
More precisely, the uniform CDF event
$\mathcal E_{M_{\mathrm{off}}}$ in \eqref{eq:event_E} holds with
probability at least $1-\beta$. A smaller value of $\beta$ therefore
yields a more reliable certificate, but also increases the radius
$\varepsilon_{M_{\mathrm{off}}}$ in \eqref{eq:epsN_def}, and hence
increases conservatism. 
\end{remark}

The certificate is distribution-free in the sense that it does not
assume any specific parametric form for the distribution of
$V_{i,j}$. It relies only on i.i.d.\ samples and the DKW inequality.
\begin{remark}
The PCE basis $\Psi_k$ is chosen according to $\mu_P$, so the
surrogate construction itself is not distribution-free. The term
\emph{distribution-free} in the certification results below refers
solely to the finite-sample VaR bound, which requires no parametric
assumption on the distribution of $V_{i,j}$.
\end{remark}

\begin{theorem}
\label{thm:offline_certificate}
Suppose \eqref{eq:epsN_gmall} holds. Then, with probability of at least
$1-\beta$, the pairwise conservative quantiles defined in
\eqref{eq:conservative_var_def} satisfy
\begin{align}
\mathrm{VaR}_\delta(V_{i,j}) \le \bar q_{i,j,\delta},
\qquad \forall i,\forall j.
\label{eq:offline_cert_pairwise}
\end{align}
Consequently,
\begin{align}
\rho_\delta \le \bar\rho_\delta.
\label{eq:offline_cert_grid}
\end{align}
In particular, if $\bar\rho_\delta\le 0$, then with probability at least
$1-\beta$,
\begin{align}
\mathbb P(V_{i,j}>0)\le \delta,
\qquad \forall i,\forall j.
\label{eq:offline_cert_tail}
\end{align}
\vspace{-3.5ex}
\end{theorem}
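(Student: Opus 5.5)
The argument will be carried out entirely on the event $\mathcal E_{M_{\mathrm{off}}}$ of \eqref{eq:event_E}. This event has probability at least $1-\beta$ by the DKW inequality (Massart's constant) combined with a union bound over the $NT$ pairs. Each pair fails with probability at most $2e^{-2M_{\mathrm{off}}\varepsilon^2}=\beta/(NT)$ by the choice \eqref{eq:epsN_def}. All three conclusions are then deterministic consequences on $\mathcal E_{M_{\mathrm{off}}}$, so they hold simultaneously with probability at least $1-\beta$.

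Fix $(i,j)$ and work on $\mathcal E_{M_{\mathrm{off}}}$. By \eqref{eq:epsN_gmall}, $p^+_{M_{\mathrm{off}}}\le 1$. Since the empirical CDF reaches $1$ at the largest sample, the set in \eqref{eq:conservative_var_def} is nonempty, and since $p^+_{M_{\mathrm{off}}}>0$ it is bounded below, so $\bar q_{i,j,\delta}$ is a finite number. Because $\widehat F^{M_{\mathrm{off}}}_{i,j}$ is a right-continuous step function, the infimum is attained, which gives $\widehat F^{M_{\mathrm{off}}}_{i,j}(\bar q_{i,j,\delta})\ge p^+_{M_{\mathrm{off}}}$. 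I regard this attainment as the only delicate point, and right-continuity of the step CDF settles it. On $\mathcal E_{M_{\mathrm{off}}}$ we then have
\[
F_{i,j}(\bar q_{i,j,\delta})\ge \widehat F^{M_{\mathrm{off}}}_{i,j}(\bar q_{i,j,\delta})-\varepsilon_{M_{\mathrm{off}}}\ge 1-\delta .
\]
Hence $\bar q_{i,j,\delta}$ belongs to the set $\{\alpha:\mathbb P(V_{i,j}\le\alpha)\ge 1-\delta\}$ whose infimum defines $\mathrm{VaR}_\delta(V_{i,j})$ in Definition~\ref{def:VAR}, and therefore $\mathrm{VaR}_\delta(V_{i,j})\le \bar q_{i,j,\delta}$. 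This proves \eqref{eq:offline_cert_pairwise} for every pair at once on the same event.

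To obtain \eqref{eq:offline_cert_grid}, note that $\mathrm{VaR}_\delta(V_{i,j})\le \bar q_{i,j,\delta}\le \bar\rho_\delta$ holds for all $(i,j)$. Maximizing over $i$ and $j$ and using Definition~\ref{def:rho_delta} gives $\rho_\delta\le\bar\rho_\delta$.

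For the final claim, if $\bar\rho_\delta\le 0$, then on $\mathcal E_{M_{\mathrm{off}}}$ we have $\mathrm{VaR}_\delta(V_{i,j})\le 0$ for every pair. The equivalence \eqref{eq:var_tail_equiv} with $\alpha=0$ then gives $\mathbb P(V_{i,j}>0)\le\delta$ for all $(i,j)$, exactly as in Remark~\ref{rem:pointwise_violation_bound}. One can also argue directly: $F_{i,j}(0)\ge F_{i,j}(\bar q_{i,j,\delta})\ge 1-\delta$ by monotonicity of the CDF. This yields \eqref{eq:offline_cert_tail} with probability at least $1-\beta$.
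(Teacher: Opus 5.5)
Your proposal is correct and follows the same route as the paper's proof. On the event $\mathcal E_{M_{\mathrm{off}}}$ you evaluate $F_{i,j}\ge \widehat F^{M_{\mathrm{off}}}_{i,j}-\varepsilon_{M_{\mathrm{off}}}$ at $\alpha=\bar q_{i,j,\delta}$ to get $F_{i,j}(\bar q_{i,j,\delta})\ge 1-\delta$, then maximize over pairs and apply \eqref{eq:var_tail_equiv}; the details you add (per-pair failure probability $\beta/(NT)$, finiteness of $\bar q_{i,j,\delta}$, attainment of the infimum by right-continuity) are correct and merely left implicit in the paper.
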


\begin{proof}
On the event $\mathcal E_{M_{\mathrm{off}}}$, for every pair $(i,j)$ and
every $\alpha\in\mathbb R$,
$
F_{i,j}(\alpha)
\ge
\widehat F_{i,j}^{M_{\mathrm{off}}}(\alpha)
-
\varepsilon_{M_{\mathrm{off}}}.
$
Now fix $(i,j)$ and set $\alpha=\bar q_{i,j,\delta}$. By
\eqref{eq:conservative_var_def},
$
\widehat F_{i,j}^{M_{\mathrm{off}}}(\bar q_{i,j,\delta})
\ge
p_{M_{\mathrm{off}}}^+
=
1-\delta+\varepsilon_{M_{\mathrm{off}}}.
$
Hence,
$
F_{i,j}(\bar q_{i,j,\delta})
\ge
\widehat F_{i,j}^{M_{\mathrm{off}}}(\bar q_{i,j,\delta})
-
\varepsilon_{M_{\mathrm{off}}}
\ge
1-\delta.
$
By Definition~\ref{def:VAR}, this implies
$
\mathrm{VaR}_\delta(V_{i,j})\le \bar q_{i,j,\delta},
$
which proves \eqref{eq:offline_cert_pairwise}. Taking the maximum over
all $(i,j)$ gives \eqref{eq:offline_cert_grid}. Finally,
\eqref{eq:offline_cert_tail} follows from \eqref{eq:var_tail_equiv}.
\end{proof}

\begin{corollary}[Conservatism]
\label{cor:sample_complexity}
Let $\varepsilon^*>0$ and $\beta\in(0,1)$. If
\begin{align}
M_{\mathrm{off}}
\ge
\frac{1}{2(\varepsilon^*)^2}
\log\!\Bigl(\frac{2NT}{\beta}\Bigr),
\label{eq:sample_complexity}
\end{align}
then the confidence radius in \eqref{eq:epsN_def} satisfies
$
\varepsilon_{M_{\mathrm{off}}}\le \varepsilon^*.
$
In particular, if $\varepsilon^*\le\delta$, then the conservative
quantile level in \eqref{eq:pNplus} satisfies
$
p_{M_{\mathrm{off}}}^+
=
1-\delta+\varepsilon_{M_{\mathrm{off}}}
\le
1-\delta+\varepsilon^*.
$
Hence, increasing $M_{\mathrm{off}}$ decreases
$\varepsilon_{M_{\mathrm{off}}}$ and therefore reduces the
conservatism of the certificate in
Theorem~\ref{thm:offline_certificate}.
\end{corollary}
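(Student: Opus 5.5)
The plan is to argue directly from the definition of the confidence radius in \eqref{eq:epsN_def}, which is strictly decreasing in $M_{\mathrm{off}}$. I will first rewrite the sample-size condition \eqref{eq:sample_complexity} in the equivalent form
\begin{align}
\frac{1}{2M_{\mathrm{off}}}\log\!\Bigl(\frac{2NT}{\beta}\Bigr)\le (\varepsilon^*)^2 .
\end{align}
This only requires multiplying both sides by the positive quantity $(\varepsilon^*)^2/M_{\mathrm{off}}$.

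Before taking square roots, I will check that both sides are nonnegative. Since $\beta\in(0,1)$ and $NT\ge 1$, we have $2NT/\beta>2$, so the logarithm is positive. Also $\varepsilon^*>0$. Taking square roots then preserves the inequality and gives $\varepsilon_{M_{\mathrm{off}}}\le\varepsilon^*$, which is the first claim.

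For the second claim, assume additionally that $\varepsilon^*\le\delta$. Chaining with the first claim gives $\varepsilon_{M_{\mathrm{off}}}\le\varepsilon^*\le\delta$, so the standing assumption \eqref{eq:epsN_gmall} holds. This means $p_{M_{\mathrm{off}}}^+$ is a valid CDF level and Theorem~\ref{thm:offline_certificate} applies. Adding $1-\delta$ to both sides of $\varepsilon_{M_{\mathrm{off}}}\le\varepsilon^*$ yields $p_{M_{\mathrm{off}}}^+\le 1-\delta+\varepsilon^*$.

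For the qualitative conclusion, I will use two monotonicity facts.
\begin{itemize}
\item The radius $\varepsilon_{M_{\mathrm{off}}}$ is proportional to $M_{\mathrm{off}}^{-1/2}$, so it decreases as $M_{\mathrm{off}}$ grows, and so does $p_{M_{\mathrm{off}}}^+$.
\item The map $p\mapsto\inf\{\alpha:\widehat F_{i,j}^{M_{\mathrm{off}}}(\alpha)\ge p\}$ is nondecreasing in $p$, because a larger level shrinks the feasible set of $\alpha$.
\end{itemize}
Hence a smaller level $p_{M_{\mathrm{off}}}^+$ gives a smaller or equal quantile $\bar q_{i,j,\delta}$ for a fixed empirical CDF, and therefore a smaller or equal $\bar\rho_\delta$.

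The only subtle point is interpreting ``reduces conservatism.'' Changing $M_{\mathrm{off}}$ also changes the samples themselves, so the monotonicity claim is only rigorous at the level of the quantile level $p_{M_{\mathrm{off}}}^+$, or for a fixed empirical CDF. I will state it in that sense rather than as an almost-sure monotonicity of $\bar\rho_\delta$ in $M_{\mathrm{off}}$.
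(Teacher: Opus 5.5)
Your proposal is correct and follows the paper's argument: both reduce $\varepsilon_{M_{\mathrm{off}}}\le\varepsilon^*$ to the equivalent inequality $\frac{1}{2M_{\mathrm{off}}}\log(2NT/\beta)\le(\varepsilon^*)^2$, and both then read off the bound on $p_{M_{\mathrm{off}}}^+$ directly from its definition. Your monotonicity argument for the empirical quantile map goes slightly beyond the paper, which asserts the concluding ``reduced conservatism'' sentence without proof; so does your caveat that this claim is rigorous only at the level of $p_{M_{\mathrm{off}}}^+$ or for a fixed empirical CDF.
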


\begin{proof}
From \eqref{eq:epsN_def}, the condition
$\varepsilon_{M_{\mathrm{off}}}\le\varepsilon^*$ is equivalent to
$
\frac{1}{2M_{\mathrm{off}}}
\log\!\Bigl(\frac{2NT}{\beta}\Bigr)
\le
(\varepsilon^*)^2,
$
which yields \eqref{eq:sample_complexity}. The bound on
$p_{M_{\mathrm{off}}}^+$ then follows directly from its definition.
\end{proof}

\begin{remark}
For fixed $M_{\mathrm{off}}$, the DKW radius
$\varepsilon_{M_{\mathrm{off}}}$ in \eqref{eq:epsN_def} is essentially
optimal among distribution-free uniform CDF error bounds
\cite{massart1990tight}. Thus, without additional assumptions on the
distribution of $V_{i,j}$, the conservatism of the certificate can only
be reduced gradually by increasing the sample size
$M_{\mathrm{off}}$.
\end{remark}


\subsection{Surrogate Approximation Error}
\label{subsec:surrogate_transfer}
The finite-sample certificate in
Section~\ref{subsec:offline_var_certification} accounts only for the
error caused by using finitely many samples. If the constraint values
$V_{i,j}$ are evaluated through a surrogate model, then one must also
account for the approximation error of the surrogate itself. The next
result shows how a uniform surrogate error bound affects the worst-case
VaR margin. For each pair $(i,j)$, let $V_{i,j}$ denote the true
constraint value, and let $S_{i,j}$ denote a surrogate approximation
of $V_{i,j}$. In the PCE case, we set
$
S_{i,j}=\widehat V_{i,j},
$
whereas in the kernel-based case, we set
$
S_{i,j}=\widetilde V_{i,j}.
$
Define the corresponding surrogate worst-case VaR margin by
$
\rho_\delta^S
=
\max_{i\in\{1,\dots,N\}}
\max_{j\in\{1,\dots,T\}}
\mathrm{VaR}_\delta(S_{i,j}).
$

\begin{theorem}
\label{thm:surrogate_transfer}
Suppose there exists $\varepsilon\ge 0$ such that
\begin{align}
|V_{i,j}-S_{i,j}|\le \varepsilon
\qquad \text{a.s.}, \qquad \forall i,\forall j.
\label{eq:surr_uniform_error}
\end{align}
Then, for every pair $(i,j)$,
\begin{align}
\mathrm{VaR}_\delta(V_{i,j})
\le
\mathrm{VaR}_\delta(S_{i,j})+\varepsilon.
\label{eq:var_transfer_pairwise}
\end{align}
Consequently,
\begin{align}
\rho_\delta
\le
\rho_\delta^S+\varepsilon.
\label{eq:rho_transfer_bound}
\end{align}
In particular, if
$
\rho_\delta^S\le -\varepsilon$,
then $\rho_\delta\le 0$, and hence
$
\mathbb P(V_{i,j}>0)\le \delta,
\,\, \forall i,\forall j.
$
\end{theorem}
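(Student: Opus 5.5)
The plan is to prove the pairwise inequality \eqref{eq:var_transfer_pairwise} directly from the tail characterization \eqref{eq:var_tail_equiv} in Definition~\ref{def:VAR}, using an almost-sure event inclusion. The bound \eqref{eq:rho_transfer_bound} and the final tail statement then follow from monotonicity of the maximum and Remark~\ref{rem:pointwise_violation_bound}.

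Fix a pair $(i,j)$ and set $a:=\mathrm{VaR}_\delta(S_{i,j})$. First I will check that $a$ is finite. The CDF of $S_{i,j}$ tends to $0$ at $-\infty$ and to $1$ at $+\infty$, and $1-\delta\in(0,1)$, so the defining set in Definition~\ref{def:VAR} is nonempty and bounded below. By right-continuity of the CDF the infimum is attained, so $\mathbb P(S_{i,j}\le a)\ge 1-\delta$, equivalently $\mathbb P(S_{i,j}>a)\le\delta$. Next, by \eqref{eq:surr_uniform_error} there is a null set $\mathcal N$ with $V_{i,j}(\omega)\le S_{i,j}(\omega)+\varepsilon$ for all $\omega\notin\mathcal N$. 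This gives the inclusion $\{V_{i,j}>a+\varepsilon\}\subseteq\{S_{i,j}>a\}\cup\mathcal N$. Hence
\begin{align}
\mathbb P(V_{i,j}>a+\varepsilon)\le \mathbb P(S_{i,j}>a)\le\delta .
\end{align}
Applying \eqref{eq:var_tail_equiv} with $\alpha=a+\varepsilon$ yields $\mathrm{VaR}_\delta(V_{i,j})\le a+\varepsilon$, which is \eqref{eq:var_transfer_pairwise}.

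For \eqref{eq:rho_transfer_bound}, I bound $\mathrm{VaR}_\delta(S_{i,j})\le\rho_\delta^S$ for each pair, which gives $\mathrm{VaR}_\delta(V_{i,j})\le\rho_\delta^S+\varepsilon$ for all $(i,j)$. Taking the maximum over the finitely many pairs gives $\rho_\delta\le\rho_\delta^S+\varepsilon$. If in addition $\rho_\delta^S\le-\varepsilon$, then $\rho_\delta\le 0$. Each pointwise bound $\mathrm{VaR}_\delta(V_{i,j})\le 0$ then gives $\mathbb P(V_{i,j}>0)\le\delta$ by \eqref{eq:var_tail_equiv}, as in Remark~\ref{rem:pointwise_violation_bound}.

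The only delicate point is the first step, namely that the infimum defining $\mathrm{VaR}_\delta(S_{i,j})$ is finite and attained. I will handle this through right-continuity of the CDF, which is exactly what the equivalence \eqref{eq:var_tail_equiv} encodes, so no further assumption on $S_{i,j}$ is needed. Only the one-sided bound $V_{i,j}\le S_{i,j}+\varepsilon$ is used; the other side would give the reverse inequality but is not required here.
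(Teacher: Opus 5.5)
Your proposal is correct and follows essentially the same route as the paper. Both arguments use the almost-sure bound $V_{i,j}\le S_{i,j}+\varepsilon$ at $\alpha=\mathrm{VaR}_\delta(S_{i,j})+\varepsilon$, then take the maximum over pairs and invoke Remark~\ref{rem:pointwise_violation_bound}; the paper writes the inclusion in CDF form $\{S_{i,j}\le\alpha-\varepsilon\}\subseteq\{V_{i,j}\le\alpha\}$, whereas you use the complementary tail form. Your explicit check that the infimum defining $\mathrm{VaR}_\delta(S_{i,j})$ is finite and attained, and your handling of the null set, make rigorous a step the paper uses only implicitly.
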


\begin{proof}
Fix a pair $(i,j)$. By \eqref{eq:surr_uniform_error},
$
V_{i,j}\le S_{i,j}+\varepsilon
\qquad \text{a.s.}
$
Hence, for every $\alpha\in\mathbb R$,
$
\{S_{i,j}\le \alpha-\varepsilon\}
\subseteq
\{V_{i,j}\le \alpha\}.
$
Therefore,
$
\mathbb P(V_{i,j}\le \alpha)
\ge
\mathbb P(S_{i,j}\le \alpha-\varepsilon).
$
Now choose
$
\alpha=\mathrm{VaR}_\delta(S_{i,j})+\varepsilon.
$
By Definition~\ref{def:VAR},
$
\mathbb P(S_{i,j}\le \alpha-\varepsilon)
=
\mathbb P\bigl(S_{i,j}\le \mathrm{VaR}_\delta(S_{i,j})\bigr)
\ge 1-\delta.
$
Hence,
$
\mathbb P(V_{i,j}\le \alpha)\ge 1-\delta.
$
Applying Definition~\ref{def:VAR} again gives
\eqref{eq:var_transfer_pairwise}. Taking the maximum over all
$(i,j)$ yields \eqref{eq:rho_transfer_bound}. The final implication
follows from Remark~\ref{rem:pointwise_violation_bound}.
\end{proof}

\begin{remark}
\label{rem:surr_error_source}
The bound $\varepsilon$ in \eqref{eq:surr_uniform_error} is any uniform
surrogate error bound. For PCE surrogates, such a bound may be obtained
from approximation theory under suitable regularity assumptions. For
kernel-based surrogates, it may be estimated from an independent
validation set by a uniform error assessment. Theorem~\ref{thm:surrogate_transfer}
does not depend on the specific surrogate construction, only on the
validity of \eqref{eq:surr_uniform_error}.
\end{remark}

Combining the finite-sample certificate from
Section~\ref{subsec:offline_var_certification} with
Theorem~\ref{thm:surrogate_transfer} gives the following practical
result.

\begin{corollary}
\label{cor:combined_surrogate_certificate}
Assume \eqref{eq:surr_uniform_error} holds a.s., and suppose the
assumptions of Theorem~\ref{thm:offline_certificate} hold for the
surrogate samples. Let $\bar\rho_\delta^S$ denote the
conservative certificate obtained by applying the finite-sample
construction of Section~\ref{subsec:offline_var_certification} to the
surrogate samples. Then, with probability at least $1-\beta$,
\begin{align}
\rho_\delta
\le
\bar\rho_\delta^S+\varepsilon.
\label{eq:combined_cert_bound}
\end{align}
In particular, if
$
\bar\rho_\delta^S\le -\varepsilon,
$
then
$
\mathbb P(V_{i,j}>0)\le \delta,
\qquad \forall i,\forall j,
$
with probability at least $1-\beta$.
\end{corollary}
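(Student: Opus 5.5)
The plan is to chain Theorem~\ref{thm:offline_certificate} and Theorem~\ref{thm:surrogate_transfer}, with the only new ingredient being a careful statement of which random variables each theorem is applied to.

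\textbf{Step 1: apply Theorem~\ref{thm:offline_certificate} to the surrogate variables.} Replace $V_{i,j}$ by $S_{i,j}$ throughout Section~\ref{subsec:offline_var_certification}. The samples are $S_{i,j}(\omega_m)$, obtained by evaluating the fixed surrogate at i.i.d.\ parameter draws $p_m=P(\omega_m)$. Since the surrogate map is a deterministic, offline-fixed function of $(t_j,p)$, these samples are i.i.d.\ copies of $S_{i,j}$. Hence the DKW inequality with the union bound gives the uniform CDF event $\mathcal E^S_{M_{\mathrm{off}}}$ for the CDFs of $S_{i,j}$, with $\mathbb P(\mathcal E^S_{M_{\mathrm{off}}})\ge 1-\beta$. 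Theorem~\ref{thm:offline_certificate}, applied verbatim under \eqref{eq:epsN_gmall}, then yields on this event that
\[
\mathrm{VaR}_\delta(S_{i,j})\le \bar q^S_{i,j,\delta}\quad\text{for all } (i,j),
\]
and therefore $\rho^S_\delta\le\bar\rho^S_\delta$.

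\textbf{Step 2: transfer the bound to the true margin.} Theorem~\ref{thm:surrogate_transfer} under \eqref{eq:surr_uniform_error} gives the deterministic inequality $\rho_\delta\le\rho^S_\delta+\varepsilon$. This involves only the laws of $V_{i,j}$ and $S_{i,j}$, not the samples, so it holds irrespective of the sampling event. Combining with Step~1 on $\mathcal E^S_{M_{\mathrm{off}}}$ gives \eqref{eq:combined_cert_bound} with probability at least $1-\beta$.

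\textbf{Step 3: the tail statement.} If $\bar\rho^S_\delta\le-\varepsilon$, then on the same event $\rho_\delta\le 0$. Remark~\ref{rem:pointwise_violation_bound}, equivalently \eqref{eq:var_tail_equiv}, then gives $\mathbb P(V_{i,j}>0)\le\delta$ for all $(i,j)$.

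\textbf{Main obstacle.} The only subtle point is justifying that the surrogate samples are i.i.d.\ with the law of $S_{i,j}$. This requires the surrogate (its PCE coefficients or kernel weights $\alpha$) to be independent of the certification samples $\omega_1,\dots,\omega_{M_{\mathrm{off}}}$, for instance by being built from separate offline data. I would state this explicitly as the interpretation of ``the assumptions of Theorem~\ref{thm:offline_certificate} hold for the surrogate samples,'' and condition on the training data if necessary. Once that is in place, the probabilistic content is carried entirely by Step~1, and the rest is deterministic.
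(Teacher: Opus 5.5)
Your proposal is correct and follows the same route as the paper: apply Theorem~\ref{thm:offline_certificate} to the surrogate samples to get $\rho_\delta^S\le\bar\rho_\delta^S$ on an event of probability at least $1-\beta$, then add the deterministic transfer bound $\rho_\delta\le\rho_\delta^S+\varepsilon$ from Theorem~\ref{thm:surrogate_transfer}. Your remark that the surrogate must be independent of the certification samples (or that one conditions on separate training data) is a sensible clarification of what the paper leaves implicit in ``its assumptions holding by hypothesis''.
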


\begin{proof}
By Theorem~\ref{thm:offline_certificate}, applied to the surrogate
samples and its assumptions holding by hypothesis,
$
\rho_\delta^S\le \bar\rho_\delta^S
$
with probability at least $1-\beta$. By
Theorem~\ref{thm:surrogate_transfer}, since \eqref{eq:surr_uniform_error}
holds a.s.,
$
\rho_\delta\le \rho_\delta^S+\varepsilon.
$
Combining the two inequalities gives
\eqref{eq:combined_cert_bound}. 
\end{proof}

\subsection{Online VaR Evaluation}
\label{subsec:online_var_evaluation}
Once a surrogate and a uniform surrogate error bound
$\varepsilon_{\mathrm{surr}}$ playing the role of $\varepsilon$
in Theorem~\ref{thm:surrogate_transfer} are available from the
offline stage, online certification reduces to surrogate evaluation
and empirical quantile computation on the certification grid. No further evaluation of the constraint function $g_i$ along the
candidate trajectory is required.
For online evaluation, draw $M_{\mathrm{on}}$ i.i.d.\ sample points
$\omega_{1},\dots,\omega_{M_{\mathrm{on}}}\in\Omega$ and form the
corresponding parameter realizations
$
p_{s}:=P(\omega_{s})\in\mathbb{R}^{n_p}$
$s=1,\dots,M_{\mathrm{on}}$. For each constraint-time pair $(i,j)$,
define the surrogate samples
\begin{align}
v_{s}
=
\begin{cases}
\widehat{V}_{i,j}(\omega_{s})
=
\displaystyle\sum_{k=0}^{K-1}
c_{i,j,k}\,\Psi_k\bigl(p_{s}\bigr),
& \text{PCE},\\[1mm]
\widetilde{V}_{i,j}(\omega_{s})
=
\widetilde{g}_i\bigl(t_j,p_{s}\bigr),
& \text{kernel},
\end{cases}
\label{eq:online_surrogate_samples}
\end{align}
where $s=1,\dots,M_{\mathrm{on}}$.
To quantify the online sampling error, define
\begin{align}
\varepsilon_{M_{\mathrm{on}}}
=
\sqrt{\frac{1}{2M_{\mathrm{on}}}
\log\!\Bigl(\frac{2NT}{\beta}\Bigr)}.
\label{eq:online_epsN_def}
\end{align}
As in Section~\ref{subsec:offline_var_certification}, we assume
$\varepsilon_{M_{\mathrm{on}}}\le\delta$, otherwise the conservative
online quantile level exceeds one and the certificate is not defined in
the present form. We then define the conservative online quantile level
$
p_{M_{\mathrm{on}}}^+
=
1-\delta+\varepsilon_{M_{\mathrm{on}}}.
$
Let
$
v_{1}
\le\cdots\le
v_{M_{\mathrm{on}}}
$
denote the order statistics of
$\{v_{s}\}_{s=1}^{M_{\mathrm{on}}}$.
The conservative online surrogate quantile is then
$\bar{q}_{on,i,j,\delta}
=
v_{s}^{[k_{\mathrm{on}}^+]}$,
where
$
k_{\mathrm{on}}^+
=
\bigl\lceil p_{M_{\mathrm{on}}}^+\,M_{\mathrm{on}}\bigr\rceil,
$
and the corresponding worst-case online surrogate certificate is
$
\bar\rho_{on,\delta}
=
\max_i\max_j\,\bar{q}_{on, i,j,\delta}.
$
By Corollary~\ref{cor:combined_surrogate_certificate}, a sufficient
online decision rule is
$
\bar\rho_{on,\delta}
+\varepsilon_{\mathrm{surr}}
\le 0.
$
If this condition holds, the candidate trajectory is certified on the
grid $\mathcal{T}$. Otherwise, the trajectory is rejected or passed to
a fallback decision layer.
Algorithm~\ref{alg:offline_online} summarizes the complete
offline-online procedure.
\begin{algorithm}[t]
\caption{Offline surrogate construction and online VaR certification}
\label{alg:offline_online}
\begin{algorithmic}[1]
\REQUIRE Candidate trajectory $x_c$ on $\mathcal{T}$, constraint
violation maps $\{g_i\}_{i=1}^{N}$, risk level $\delta\in(0,1)$,
failure probability $\beta\in(0,1)$, surrogate type (PCE or kernel)
\ENSURE Certified online decision based on
$\bar\rho_{\delta, on}+\varepsilon_{\mathrm{surr}}$

\medskip
\textbf{OFFLINE}
\STATE Choose the certification grid $\mathcal{T}=\{t_j\}_{j=1}^{T}$
\STATE Construct a surrogate model:
\begin{itemize}
\item \textbf{PCE:} choose orthonormal basis $\{\Psi_k\}$ and
quadrature nodes $\{(p_q,w_q)\}_{q=1}^{Q}$; compute
coefficients $\{c_{i,j,k}\}$ via \eqref{eq:pce_projection}
\item \textbf{Kernel:} collect training data
$\mathcal{D}_i=\{(z_\ell,y_\ell)\}_{\ell=1}^{L}$
as in Section~\ref{subsec:kernel_extension}; compute coefficients
$\alpha_\ell$
\end{itemize}
\STATE Compute or estimate a uniform surrogate error bound
$\varepsilon_{\mathrm{surr}}$ satisfying
\eqref{eq:surr_uniform_error}

\medskip
\textbf{ONLINE}
\STATE Draw $M_{\mathrm{on}}$ i.i.d.\ sample points
$\omega_{1},\dots,\omega_{M_{\mathrm{on}}}\in\Omega$
\STATE Evaluate surrogate samples $v_{s}$
for all $i=1,\dots,N$, $j=1,\dots,T$, $s=1,\dots,M_{\mathrm{on}}$
via \eqref{eq:online_surrogate_samples}
\STATE Compute $\varepsilon_{M_{\mathrm{on}}}$ from
\eqref{eq:online_epsN_def}
\IF{$\varepsilon_{M_{\mathrm{on}}}>\delta$}
  \STATE Increase $M_{\mathrm{on}}$; the certificate is not defined in
  the present form
\ENDIF
\STATE Set
$p_{M_{\mathrm{on}}}^+=1-\delta+\varepsilon_{M_{\mathrm{on}}}$
\STATE Sort the surrogate samples, compute the conservative quantiles
$\bar{q}_{on,i,j,\delta}$ for all $(i,j)$, and take
their maximum to obtain $\bar\rho_{on,\delta}$
\IF{$\bar\rho_{on,\delta}
+\varepsilon_{\mathrm{surr}}\le 0$}
  \STATE Certify the candidate trajectory and \textbf{accept}
\ELSE
  \STATE Do not certify the candidate trajectory and \textbf{reject}
\ENDIF
\end{algorithmic}
\end{algorithm}

\section{Case Study: Crystallization Population Balance Model}
\label{sec:case_study}
We illustrate the proposed certification framework on a crystallization
process described by a 1-D population balance model (PBM) \cite{ramkrishna2000population}
with uncertain growth and nucleation kinetics. This example is
representative of an expensive nonlinear process model in which direct
Monte Carlo evaluation of time-varying risk requires repeated numerical
solution of a PDE-constrained system and is therefore too costly for
fast online use.

The purpose of this case study is twofold. First, it illustrates surrogate-based risk evaluation for a fixed candidate trajectory under parametric uncertainty. Second, it demonstrates the practical advantage
of surrogate-based risk certification: once the surrogate is constructed
offline, repeated VaR evaluations become significantly faster than running repeated PBM
simulations, while the surrogate based upper bound remains conservative.

In this case study, the certified object is a fixed operating
trajectory, namely the prescribed solubility trajectory
$C^\ast(t_j)$ on the certification grid $\mathcal T$. Thus, in the notation of Section~\ref{sec:problem}, the candidate object
is the deterministic operating profile $x_c(\cdot)=C^\ast(\cdot)$. The
constraint functions are obtained by propagating the PBM under this fixed profile and
then evaluating the resulting quality variables. For brevity, the explicit
dependence on the fixed profile $x_c$ is suppressed in the notation below.

For a fixed realization $p=(k_g,k_b)\in\Pi$ of the uncertain kinetic
parameter vector, the growth and nucleation laws depend on $p$ through
$G(S,p)=G(S;k_g)$, $B(S,p)=B(S;k_b)$. let $n(L,t,p)$ denote the particle number density, where
$L\in[0,L_{\max}]$ is the crystal size coordinate and
$t\in[0,t_f]$ is time. The PBM is
\begin{align}
\frac{\partial n}{\partial t}(L,t,p)
  + \frac{\partial}{\partial L}\!\Bigl(
    G(S(t,p),p)\,n(L,t,p)
  \Bigr)
  = 0,
  \label{eq:pbm_pde}
\end{align}
for $L\in[0,L_{\max}]$ and $t\in[0,t_f]$, with boundary condition
$
G(S(t,p),p)\,n(0,t,p)=B(S(t,p),p)
$
and initial condition
$
n(L,0,p)=n_0(L).
$
Here $G$ and $B$ denote the crystal growth and nucleation rates,
respectively.

The supersaturation is coupled to the particle population through
$
\frac{dC}{dt}(t,p)
  =
  -3\rho_c k_v G(S(t,p),p)\,\mu_2(t,p),
$
where $\rho_c$ is the crystal density, $k_v$ is the volumetric shape
factor, and
$
\mu_j(t,p)
  =
  \int_0^{L_{\max}}L^j n(L,t,p)\,dL
$
denotes the $j$-th moment of the particle size distribution. The
supersaturation is
$
S(t,p)
  =
  \frac{C(t,p)-C^\ast(t)}{C^\ast(t)},
$
where $C^\ast(t)$ is the prescribed solubility profile.

After spatial discretization on a uniform size grid
$L_\ell=\ell\Delta L$, $\ell=0,\dots,N_L$, with
$\Delta L=L_{\max}/N_L$, the state is
$
x(t,p)
  =
  [n_1(t,p),\dots,n_{N_L+1}(t,p),C(t,p)]^\top.
$

We assess the safety of the resulting trajectory under uncertainty using two time-varying quality constraints. First, the volume weighted mean crystal size
$d_{43}(t,\theta)
=
\frac{\mu_4(t,\theta)}{\mu_3(t,\theta)}
$
is required to remain below a prescribed upper bound $d_{43,\max}(t)$. Second, the coefficient of variation of the crystal-size distribution
$
\mathrm{CV}(t,\theta)
=
\frac{\sqrt{\mu_2(t,\theta)/\mu_0(t,\theta)-\left(\mu_1(t,\theta)/\mu_0(t,\theta)\right)^2}}
{\mu_1(t,\theta)/\mu_0(t,\theta)},
$
is required to remain below a prescribed upper bound $\mathrm{CV}_{\max}(t)$. At each
grid point $t_j\in\mathcal T$, define the deterministic constraint function maps
$
r_1(t_j,p)
  = d_{43}(t_j,p)-d_{43,\max}(t_j),
$
$
r_2(t_j,p)
  = \mathrm{CV}(t_j,p)-\mathrm{CV}_{\max}(t_j).
$
Hence, in the notation of Section~\ref{sec:problem}, the corresponding
constraint value function random variables are
$
V_{1,j}=r_1(t_j,P),
\qquad
V_{2,j}=r_2(t_j,P).
$
Equivalently, for $\omega\in\Omega$,
$
V_{1,j}(\omega)=r_1\bigl(t_j,P(\omega)\bigr),
\qquad
V_{2,j}(\omega)=r_2\bigl(t_j,P(\omega)\bigr).
$
Positive values indicate violation, and nonpositive values indicate
satisfaction of the corresponding quality specification.
The uncertainty is purely parametric and time-invariant over each
trajectory. For each sample, a realization
$
P(\omega)=(k_g(\omega),k_b(\omega))
$
of the uncertain parameter vector $P$ is drawn once at
$t=t_0$ from the induced distribution described in
Table~\ref{tab:pbm_settings_extended}, and then kept fixed over the
entire horizon. 
This is equivalent to the general formulation in 
Section~\ref{subsec:pce}, where $P = (k_g, k_b)$ plays the 
role of the uncertain parameter vector and $Z$ is the 
auxiliary variable used to construct the orthonormal basis.
Since the uncertain parameters $(k_g, k_b)$ do not follow a 
standard distribution, we introduce an auxiliary Gaussian random 
vector $Z = (Z_1, Z_2)$ and obtain $(k_g, k_b)$ through the 
transformations
$K_g=\max\{k_g^{\mathrm{nom}}(1+0.2Z_1),\,0.2k_g^{\mathrm{nom}}\}$ and $K_b=\max\{k_b^{\mathrm{nom}}(1+0.3Z_2),\,0.1k_b^{\mathrm{nom}}\}$
as indicated in Table~\ref{tab:pbm_settings_extended}. This 
allows the PCE basis to be constructed in the standard Gaussian 
setting while recovering the correct parameter distribution.

The role of the surrogate is particularly clear in this setting.
Computing the VaR profile directly by Monte Carlo requires solving the
PBM repeatedly for many parameter realizations and then estimating
quantiles at all grid points. During the online phase, the proposed framework
replaces those repeated PBM solving processes by repeated
evaluations of a surrogate constraint model. In the present example, PCE is
well suited because the uncertainty is low-dimensional, parametric, and
enters the constraint value functions smoothly through the kinetics. The kernel surrogate
serves as a data-driven alternative that does not rely on a fixed
polynomial basis and remains applicable when one has only sampled
input-output evaluations.
The numerical settings used in the case study are summarized in
Table~\ref{tab:pbm_settings_extended}. For
$N=2$, $T=301$, $\delta=0.10$, and $\beta=0.05$, the condition
\eqref{eq:epsN_gmall} requires
$
M \ge \frac{1}{2\delta^2}
\log\!\Bigl(\frac{2NT}{\beta}\Bigr)
\approx 505.
$
Moreover, achieving a DKW radius $\varepsilon^*=0.05$ would require
$
M \ge \frac{1}{2(\varepsilon^*)^2}
\log\!\Bigl(\frac{2NT}{\beta}\Bigr)
\approx 2018
$.
The present illustration uses $M_{\mathrm{off}}=200$, which yields
$\varepsilon_{M_{\mathrm{off}}}\approx 0.159>\delta$. Therefore, the
assumptions of Theorem~\ref{thm:offline_certificate} are not satisfied
for a formal distribution-free certificate at the level $\delta=0.10$.
In this case study, the surrogate-based upper curves are
interpreted as conservative empirical risk indicators rather than as
formal finite-sample certificates. A fully rigorous certificate at this
risk level would require a larger certification sample size.
We focus primarily on the constraint function $V_{1,j}$, which quantifies the risk
of crystal coarsening through the $d_{43}$ specification. The same
analysis applies to $V_{2,j}$ for the coefficient of the variation
constraint.

The left panel of Fig.~\ref{fig:pbm_combined} shows snapshots of the
particle size distribution under nominal and worst-case uncertainty
realizations at three selected times. Worst-case realizations shift the
distribution toward larger crystal sizes, driving $d_{43}$ closer to the
time-varying bound $d_{43,\max}(t)$.
Here, the worst-case realization denotes the sampled parameter realization
that produces the largest value of the selected constraint function over the plotted
time interval.
The right panel of Fig.~\ref{fig:pbm_combined} shows the time evolution
of the corresponding quality trajectory and surrogate-based risk
profile. The empirical VaR associated with $V_{1,j}$ increases along
the horizon and reaches its maximum near the end of the batch. The
surrogate-based upper curve
$\bar{q}_{on, 1,j,\delta}+\varepsilon_{\mathrm{surr}}$
remains above the empirical estimate in
this example. Thus the framework provides a fast and conservative risk
indicator for the fixed candidate trajectory.
Figure~\ref{fig:pbm_risk_certificate_v2} shows the analogous result for
the CV-based constraint value $V_{2,j}$. Over part of the horizon, the
uncertainty band overlaps the safety limit, leading to a positive
empirical VaR and therefore a nonzero probability of CV violation,
despite a feasible nominal trajectory. Again, the surrogate-based upper
curve remains conservative in this example.
Finally, Fig.~\ref{fig:var_comparison} compares the VaR profiles
obtained from direct Monte Carlo simulation of the PBM, from the PCE
surrogate, and from the kernel surrogate. The red dashed curve shows the
surrogate-based upper bound
$\bar{q}_{on, 1,j,\delta}+\varepsilon_{\mathrm{surr}}$.
Direct Monte Carlo requires approximately $7164.6$~ms, whereas the PCE
and kernel surrogates require $0.034$~ms and $0.189$~ms, respectively.
Hence the surrogates preserve the relevant risk information while
substantially reducing evaluation time in the online phase.

\begin{figure*}[t]
    \centering
    \begin{subfigure}{0.48\textwidth}
        \centering
        \includegraphics[width=\textwidth]{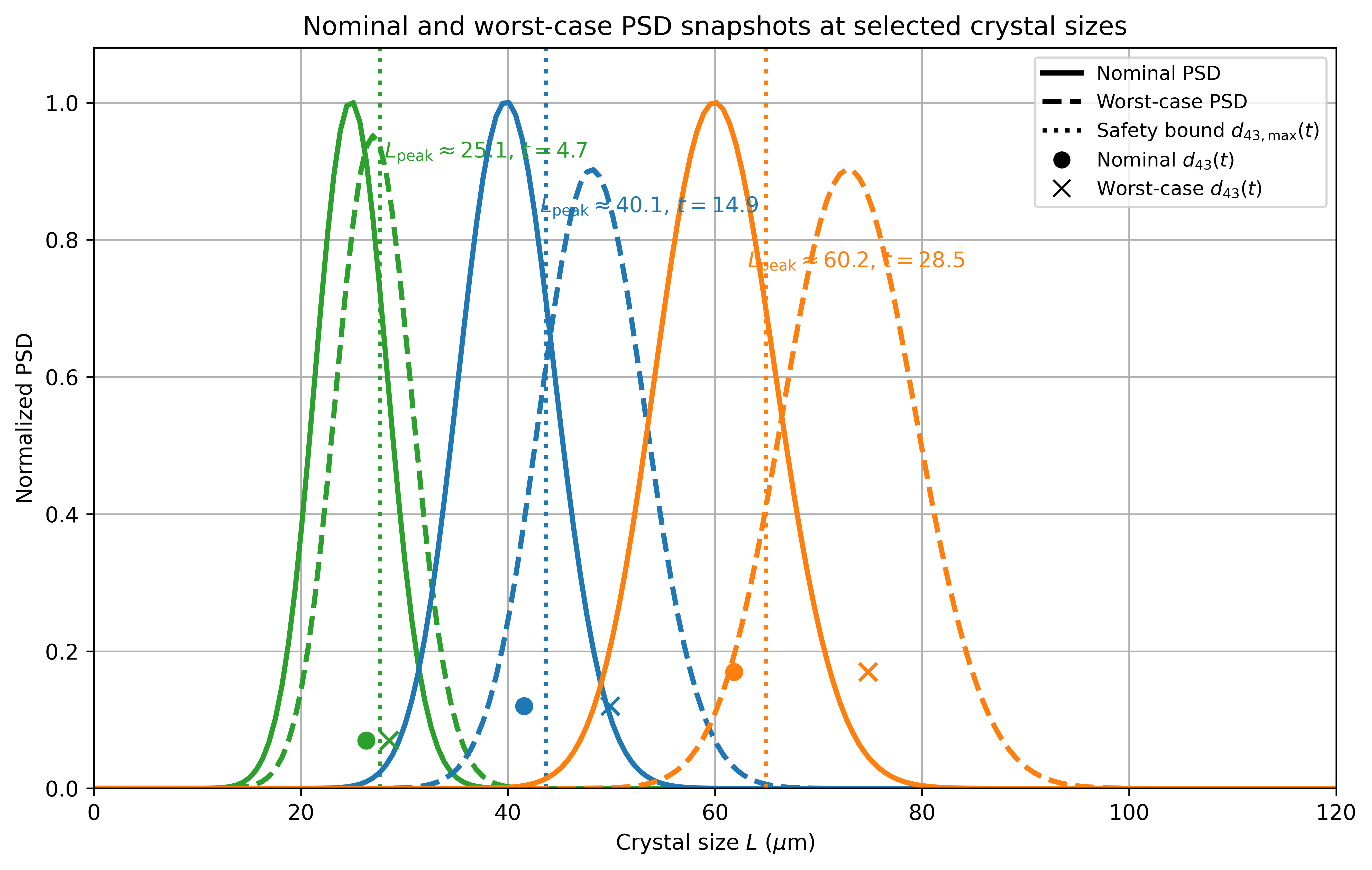}
        \caption{PSD snapshots under nominal and worst-case uncertainty realizations.}
        \label{fig:pbm_psd_snapshots}
    \end{subfigure}
    \hfill
    \begin{subfigure}{0.48\textwidth}
        \centering
        \includegraphics[width=\textwidth]{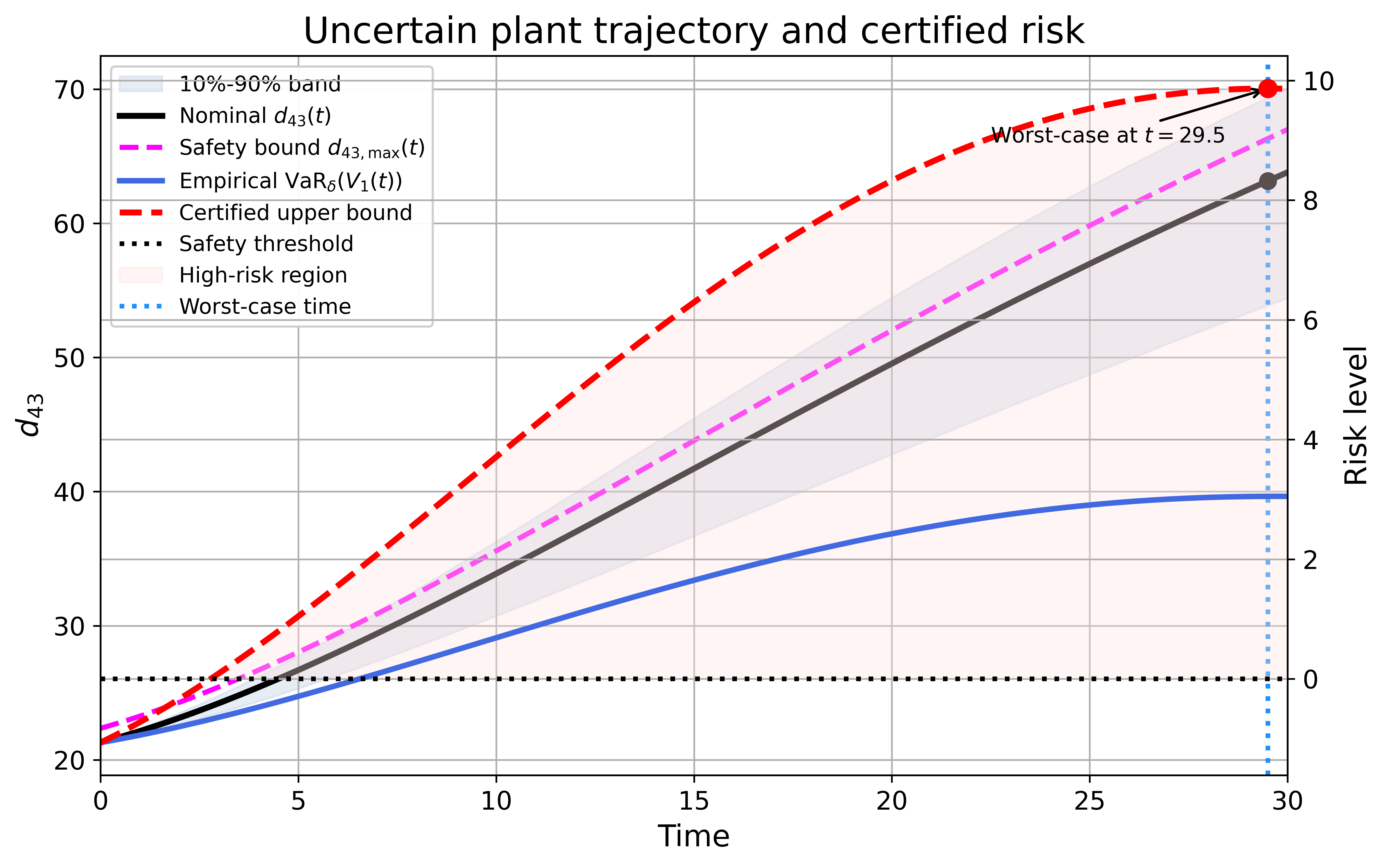}
        \caption{Time evolution of the trajectory and the corresponding certified risk.}
        \label{fig:pbm_risk_certificate}
    \end{subfigure}
    \caption{PBM case study under uncertain growth and nucleation kinetics.
Left: distribution-level effect of uncertainty at selected times.
Right: time-resolved surrogate-based risk assessment based on
$V_{1,j}(\omega)=r_1\bigl(t_j,P(\omega)\bigr)=d_{43}(t_j,P(\omega))-d_{43,\max}(t_j)$.}
    \label{fig:pbm_combined}
\end{figure*}

\begin{figure}[t]
    \centering
    \includegraphics[width=\columnwidth]{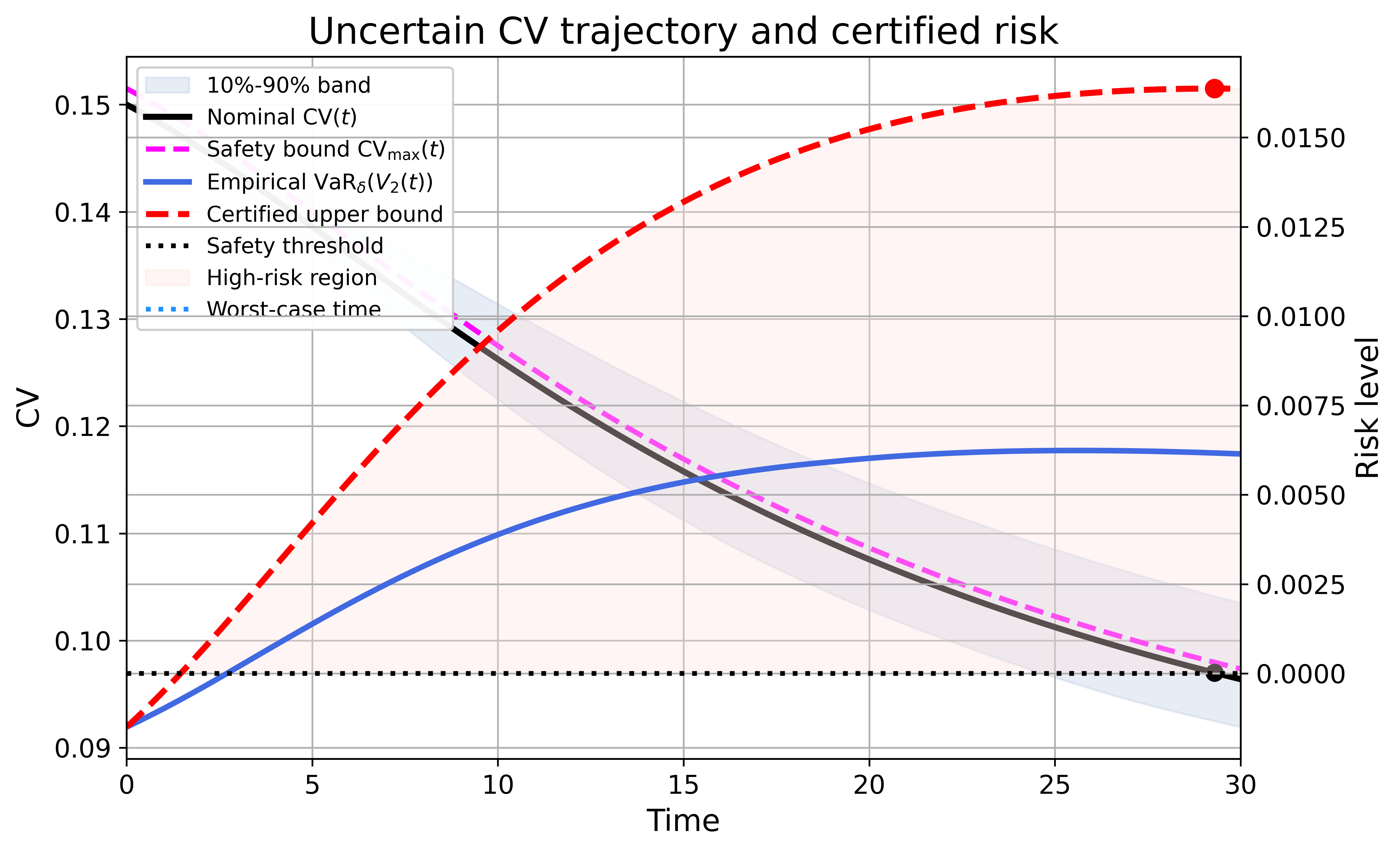}
    \caption{Uncertain trajectory and surrogate-based risk for the CV-based
    constraint value 
    $V_{2,j}(\omega)=r_2\bigl(t_j,P(\omega)\bigr)=\mathrm{CV}(t_j,P(\omega))-\mathrm{CV}_{\max}(t_j)$.
    The left axis shows the nominal trajectory $\mathrm{CV}(t)$,
    the time-varying safety bound $\mathrm{CV}_{\max}(t)$, and the
    $10\%$-$90\%$ uncertainty band. The right axis shows the empirical
    $\mathrm{VaR}_{\delta}(V_{2,j})$ and the conservative finite-sample
    upper confidence bound. Positive values of the risk curves indicate
    potential violation of the CV specification.}
    \label{fig:pbm_risk_certificate_v2}
\end{figure}

\begin{figure}[t]
\centering
\includegraphics[width=\columnwidth]{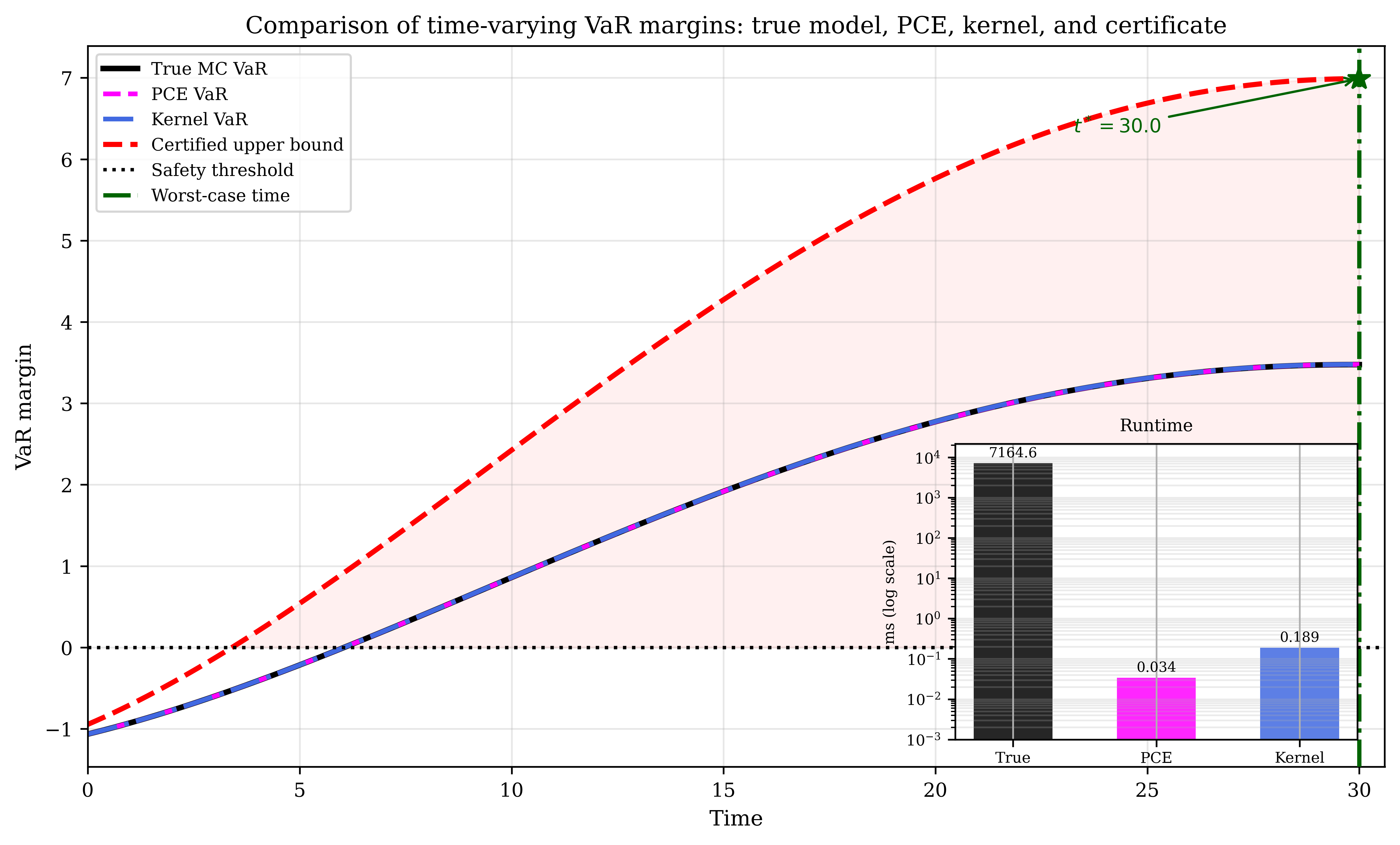}
\caption{Comparison of time-varying VaR margins for $V_{1,j}$
from direct Monte Carlo (original PBM), the PCE surrogate, and the
kernel surrogate. The red dashed curve shows the conservative
surrogate-based upper bound; the vertical dash-dotted line marks the
worst-case time. The inset reports evaluation runtimes.}
\label{fig:var_comparison}
\end{figure}

\begin{table}[t]
\caption{Settings for the crystallization PBM case study.}
\label{tab:pbm_settings_extended}
\centering
\footnotesize
\begin{tabular}{lll}
\hline
\textbf{Item} & \textbf{Symbol} & \textbf{Value / Description} \\
\hline
Time horizon & $t_f$ & $30$ \\
Evaluation grid & $T$ & $301$ \\
Risk level & $\delta$ & $0.10$ \\
Confidence & $\beta$ & $0.05$ \\
Illustrative sample size & $M_{\mathrm{off}}$ & $200$ \\
DKW radius & $\varepsilon_{M_{\mathrm{off}}}$ & $0.159$ \\
Minimum $M$ for $\varepsilon_M\le\delta$ & -- & $505$ \\
Required $M$ for $\varepsilon^*=0.05$ & -- & $2018$ (Cor.~\ref{cor:sample_complexity}) \\
\hline
\textbf{Uncertainty} & & \\
\hline
Growth & $k_g$ &
$Kg$ \\
Nucleation & $k_b$ &
$Kb$ \\
Auxiliary variables & &
$Z_1,Z_2\stackrel{\mathrm{i.i.d.}}{\sim}\mathcal{N}(0,1)$ \\
\hline
\end{tabular}
\end{table}

\section{Conclusion}
\label{sec:conclusion}
This paper presents an offline-online framework for VaR-based
assessment and certification of fixed candidate trajectories under
uncertain time-varying constraints on a prescribed discrete
certification grid. The theoretical results provide a distribution-free
finite-sample upper bound based on the DKW inequality and a
surrogate-transfer bound under a uniform surrogate approximation error.
The PBM example illustrates the computational advantage of
surrogate-based evaluation and shows that the resulting risk curves
track direct Monte Carlo results closely. In the present case study, the
sample size is chosen for illustration and does not satisfy the
finite-sample condition required for a formal certificate at
$\delta=0.10$; establishing such a certificate would require a larger
certification sample budget. Future work will address continuous-time
interpolation guarantees, less conservative confidence bounds, and
operational constructions of surrogate error bounds.

\bibliographystyle{ieeetr}  
\bibliography{ref}
\end{document}